\newtheorem{theorem}{Theorem}[section]
\newtheorem{proposition}[theorem]{Proposition}
\newtheorem{definition}[theorem]{Definition}
\newtheorem{claim}[theorem]{Claim}
\newtheorem{lemma}[theorem]{Lemma}
\newtheorem{remark}[theorem]{Remark}
\newcommand{\qedsymb}{\hfill{\rule{2mm}{2mm}}}
\renewenvironment{proof}[1][]{\begin{trivlist}
\item[\hspace{\labelsep}{\bf\noindent Proof#1:\/}] }{\qedsymb\end{trivlist}}
\def\calF{{\cal F}}
\renewcommand{\epsilon}{\varepsilon}
\newcommand{\rank}{\mathop{\mathrm{rank}}}
\newcommand{\minrank}{\mathop{\mathrm{minrk}}}
\newcommand{\tb}{\mathop{\mathrm{tb}}}
\newcommand{\Fset}{\mathbb{F}}         
\begin{document}

\title{{\bf Task-based Solutions to Embedded Index Coding}}

\author{
Ishay Haviv\thanks{School of Computer Science, The Academic College of Tel Aviv-Yaffo, Tel Aviv 61083, Israel.
}
}

\date{}

\maketitle

\begin{abstract}
In the {\em index coding} problem a sender holds a message $x \in \{0,1\}^n$ and wishes to broadcast information to $n$ receivers in a way that enables the $i$th receiver to retrieve the $i$th bit $x_i$. Every receiver has prior side information comprising a subset of the bits of $x$, and the goal is to minimize the length of the information sent via the broadcast channel.
Porter and Wootters have recently introduced the model of {\em embedded index coding}, where the receivers also play the role of the sender and the goal is to minimize the total length of their broadcast information.
An embedded index code is said to be {\em task-based} if every receiver retrieves its bit based only on the information provided by one of the receivers.

This paper studies the effect of the task-based restriction on linear embedded index coding.
It is shown that for certain side information maps there exists a linear embedded index code of length quadratically smaller than that of any task-based embedded index code.
The result attains, up to a multiplicative constant, the largest possible gap between the two quantities. The proof is by an explicit construction and the analysis involves spectral techniques.
\end{abstract}

\section{Introduction}

In the {\em index coding} problem, introduced by Birk and Kol~\cite{BirkKol98}, a sender holds a message $x \in \{0,1\}^n$ and wishes to broadcast information to $n$ receivers $R_1, \ldots, R_n$ in a way that enables each receiver $R_i$ to retrieve its own message $x_i \in \{0,1\}$. For this purpose, the receivers are allowed to use some side information that they have in advance comprising a subset of the bits of $x$. The side information map is naturally represented by a directed graph $G$ on the vertex set $[n] = \{1,2,\ldots,n\}$ that includes a directed edge $(i,j)$ if the receiver $R_i$ knows $x_j$. We will usually consider symmetric side information maps and will thus refer to $G$ as undirected. For a given side information graph $G$, the goal is to design a coding function that maps any $n$-bit message $x \in \{0,1\}^n$ to a broadcast information of as few bits as possible so that the receivers are able to retrieve their messages based on this information and on the side information that they have. For example, for the complete graph on $n$ vertices, which corresponds to the situation where every receiver $R_i$ knows all the bits of $x$ except $x_i$, broadcasting one bit of information that consists of the xor of the $x_i$'s suffices for the receivers to discover their messages. Of special interest is the setting of {\em linear} index coding in which the sender is restricted to apply a linear encoding function over the binary field $\Fset_2$. Bar-Yossef, Birk, Jayram, and Kol~\cite{BBJK06} have shown that the minimum length of a linear index code for a side information graph $G$ is precisely characterized by the minrank parameter denoted ${\minrank}_2(G)$.

In a recent work, Porter and Wootters~\cite{PorterW19} have introduced a variant of the index coding problem called {\em embedded index coding} whose study is motivated by applications in distributed computation.
In this model, the receivers also play the role of the sender, namely, the coding scheme involves a set of receivers each of which broadcasts to all other receivers information that depends only on the messages known to it according to the side information graph. As before, every receiver should be able to retrieve its message based on the broadcasted information and on the side information that it has. The goal here is to minimize the total number of bits broadcasted by all receivers. Note that any coding scheme used for embedded index coding induces a coding scheme of the same length for the sender in the standard centralized setting. On the other hand, it was shown in~\cite{PorterW19} that for every side information graph $G$ (with no isolated vertices) there exists a linear embedded index code whose length is at most twice the length of an optimal linear index code for $G$ in the centralized setting, i.e., at most $2 \cdot {\minrank}_2(G)$.

A special family of solutions to embedded index coding is that of {\em task-based} index codes, defined and studied in~\cite{PorterW19}.
As before, for a given side information graph $G$ a subset of the receivers broadcasts information to all other receivers. However, while in a general embedded index code a receiver is allowed to retrieve its message using the entire broadcast information (and the side information), in a task-based solution every receiver can use only the broadcast information provided by {\em one} of the receivers. In other words, every receiver that plays the role of a sender in a task-based solution is responsible to a subset of its neighborhood in the side information graph, in the sense that the information that it broadcasts enables each receiver in this subset to retrieve its message. Clearly, the length of an optimal task-based embedded index code for a given side information graph $G$ is at least as large as the length of an optimal general embedded index code for $G$.

The study of task-based solutions to embedded index coding is motivated by several aspects of the general embedded index coding problem.
Firstly, task-based solutions seem to be more computationally tractable than those of general embedded index coding.
Indeed, a heuristic algorithm of~\cite{PorterW19} produces a task-based solution to a given instance of embedded index coding by first choosing a partition of the receivers into sets each of which is associated with a sender, and then computing the optimal centralized solutions of the instances induced by each of these sets.
Secondly, task-based solutions are more robust compared to general solutions of embedded index coding, in the sense that a receiver is able to retrieve its message whenever the single sender responsible to it succeeds in broadcasting its information, independently of errors and delays of other senders.
Finally, task-based embedded index coding is related to other notions studied in the area.
This includes instantly decodable network codes~\cite{LeTDM13}, whose study is concerned with maximizing the number of receivers that a sender can handle, and locally decodable index codes~\cite{HavivL12,NatarajanKL18} in which every receiver is allowed to use only a small part of the entire broadcast information.

\subsection{Our Contribution}

The present paper studies the effect of the task-based restriction on linear embedded index coding.
For a side information graph $G$, let $\tb(G)$ denote the minimum total length of a linear task-based embedded index code for $G$.
We first observe that $\tb(G)$ is at most quadratic in the minimum length of a linear index code for $G$ in the centralized setting.

\begin{proposition}\label{prop:upper_t(G)}
For every graph $G$ with no isolated vertices, $\tb(G) \leq O({\minrank}_2(G)^2)$.
\end{proposition}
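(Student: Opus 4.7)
The plan is to construct an explicit task-based code from a single rank-$k$ matrix witnessing $k := \minrank_2(G)$, using at most $2k$ senders, each of which broadcasts at most $k$ bits. I would fix $M \in \Fset_2^{n \times n}$ with $\rank(M)=k$, $M_{ii}=1$ for all $i$, and $M_{ij}=0$ for every non-edge $(i,j)$ with $i \neq j$, and pick a set $J \subseteq [n]$ of $k$ rows that span the row space of $M$.

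A short calculation shows that $J$ is a dominating set of $G$: expanding $M_i = \sum_{j \in J} \alpha_{ij} M_j$ and comparing the $i$th coordinates gives $1 = \sum_j \alpha_{ij} M_{ji}$, so $M_{ji} \neq 0$ for some $j \in J$, which forces $j \in N[i]$ by the support constraints on $M$. Since $G$ has no isolated vertex, I would then extend $J$ into a total dominating set $S$ of size at most $2k$ by adding, for every $j \in J$ with $N(j) \cap J = \emptyset$, an arbitrary element of $N(j)$.

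Using $S$, I would assign each receiver $i \in [n]$ a sender $\sigma(i) \in N(i) \cap S$ and set $R_s := \sigma^{-1}(s) \subseteq N(s)$ for each $s \in S$. Sender $s$ broadcasts a basis of the subspace $W_s := \linspan\{M_i|_{N[s]} : i \in R_s\} \subseteq \Fset_2^{N[s]}$, evaluated at the sub-vector $x|_{N[s]}$ that $s$ knows. To decode, receiver $i \in R_s$ first reconstructs $M_i|_{N[s]} \cdot x|_{N[s]}$ as a fixed linear combination of the bits broadcast by $s$; by the defining properties of $M$ this equals $x_i + \sum_{j \in N(i) \cap N[s]} M_{ij}\, x_j$, and $i$ subtracts the side-information combination to recover $x_i$.

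Summing the broadcast lengths yields $\tb(G) \leq \sum_{s \in S} \dim(W_s) \leq |S| \cdot k \leq 2k^2$, because each $W_s$ is spanned by rows of a submatrix of the fixed rank-$k$ matrix $M$ and hence has rank at most $\rank(M) = k$. I expect the key step to be the observation that a row-basis $J$ of $M$ is automatically a dominating set of $G$: this connection between minrank and domination is what bounds the number of senders by $O(k)$, after which the extension to total domination and the per-sender broadcast design are routine.
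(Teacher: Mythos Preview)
Your approach is correct in spirit, but there is one slip to fix: in this paper's model a sender $s$ only knows $x|_{N(s)}$, not $x|_{N[s]}$ --- sender $s$ is itself a receiver trying to \emph{learn} $x_s$, and the encoding functions in Item~\ref{itm:3} have domain $\Fset_2^{|N(s)|}$. The repair is immediate: define $W_s$ and evaluate everything on $N(s)$ rather than $N[s]$. Since $i \in R_s \subseteq N(s)$, the term $M_{ii}x_i = x_i$ still appears in $M_i|_{N(s)} \cdot x|_{N(s)}$, the remaining terms lie in $N(i)\cap N(s)$ and can be subtracted by $i$, and the bound $\dim W_s \le \rank(M)=k$ is unaffected. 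With that change your argument gives $\tb(G) \le 2k^2$.

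Modulo this, your route genuinely differs from the paper's. The paper first proves the intermediate inequality $\tb(G) \le \gamma(G)\,(\minrank_2(G)+1)$ by invoking the neighborhood-partition characterization of $\tb$ (Lemma~\ref{lemma:t(G)_eq_def}) and bounding each $\minrank_2(G[N_i])$ by $\minrank_2(G)$; it then concludes via $\gamma(G) \le \alpha(G) \le \minrank_2(G)$. You instead give a direct linear-algebraic reason that $\gamma(G) \le \minrank_2(G)$ --- a row basis $J$ of an optimal representing matrix is automatically a dominating set --- and then pass to a total dominating set of size at most $2k$ and build the code explicitly from $M$, without going through Lemma~\ref{lemma:t(G)_eq_def}. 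Your argument is more constructive (it produces an explicit task-based code from an optimal $M$) and bypasses $\alpha(G)$; the paper's argument is more modular and yields the slightly better constant $k(k+1)$ in place of your $2k^2$.
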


Our main result is the following matching lower bound.

\begin{theorem}\label{thm:main}
For every integer $k$ there exists a graph $G$ such that $\minrank_2(G)=k$ and $\tb(G)= \Theta(k^2)$.
\end{theorem}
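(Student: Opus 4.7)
The upper bound $\tb(G)\le O(k^2)$ whenever $\minrank_2(G)=k$ is supplied by Proposition~\ref{prop:upper_t(G)}, so the proof of Theorem~\ref{thm:main} reduces to exhibiting an explicit graph $G$ on $n=\Theta(k^2)$ vertices with $\minrank_2(G)=k$ and $\tb(G)=\Omega(k^2)$.

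I would first unwind the definition of a linear task-based scheme into a combinatorial optimization problem. Such a scheme is determined by a ``parent'' map $\sigma:V\to V$ with $\sigma(u)\in N(u)$: the responsibility sets $S_v:=\sigma^{-1}(v)\subseteq N(v)$ partition $V(G)$, and $v$ broadcasts $\ell_v$ bits. Restricting the input $x$ to be supported on $S_v$ turns $v$'s broadcast into a valid centralized linear index code for the induced subgraph $G[S_v]$, forcing $\ell_v\ge\minrank_2(G[S_v])$. Consequently,
\[
\tb(G)\;\ge\;\min_\sigma\sum_{v\in V}\minrank_2(G[S_v]),
\]
and the goal becomes to design $G$ for which this minimum is $\Omega(k^2)$ over all valid $\sigma$.

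The plan is then to choose $G$ with bounded codegree on edges, i.e., $|N(u)\cap N(v)|\le C$ whenever $(u,v)\in E(G)$, for a constant $C$. This implies that every induced subgraph $G[S]$ with $S\subseteq N(v)$ has maximum degree at most $C$ in itself, so the greedy independence bound $\alpha(G[S])\ge |S|/(C+1)$ combined with $\minrank_2(H)\ge\alpha(H)$ yields
\[
\tb(G)\;\ge\;\sum_v\alpha(G[S_v])\;\ge\;\frac{1}{C+1}\sum_v|S_v|\;=\;\frac{n}{C+1}\;=\;\Omega(k^2).
\]

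What remains is to construct an explicit graph on $\Theta(k^2)$ vertices satisfying (i) $\alpha(G)\ge k$, so that $\minrank_2(G)\ge k$; (ii) an explicit rank-$k$ orthogonality representation over $\mathbb{F}_2$, certifying $\minrank_2(G)\le k$; and (iii) the codegree bound $|N(u)\cap N(v)|=O(1)$ on every edge. Natural candidates are algebraic graphs such as Cayley graphs of $\mathbb{F}_2^m$ with a carefully chosen connection set, or polarity-type graphs of finite incidence geometries. The spectral machinery advertised in the abstract is used both to produce the low-rank $\mathbb{F}_2$-representation and to bound the off-diagonal entries of $A^2$ (the square of the adjacency matrix) by $O(1)$, which is exactly the codegree condition. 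The principal obstacle is the inherent tension between conditions (ii) and (iii): small global minrank typically forces highly structured graphs that are \emph{locally dense}, so the construction must delicately balance the algebraic structure against the local sparsity demanded by the spectral bound.
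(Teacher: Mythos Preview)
Your reduction in the opening paragraph is correct and is exactly Lemma~\ref{lemma:t(G)_eq_def}, and the codegree argument is logically sound: if every edge of $G$ has at most $C$ common neighbours then each $G[S]$ with $S\subseteq N(v)$ has maximum degree at most $C$, whence $\alpha(G[S])\ge |S|/(C+1)$ and $\tb(G)\ge n/(C+1)$.

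The genuine gap is that you never construct the graph, and the ``principal obstacle'' you name in your last sentence is in fact the whole problem. Your hypotheses already force the maximum degree of $G$ to be at most $(C+1)k$ (since $|N(v)|/(C+1)\le\alpha(G[N(v)])\le\alpha(G)\le k$), so you are asking for a sparse, locally almost triangle-free graph on $\Theta(k^2)$ vertices with $\minrank_2(G)=\alpha(G)=\Theta(\sqrt n)$. Neither suggested family achieves this: polarity graphs of $PG(2,q)$ have edge-codegree at most $1$ but $\alpha=\Theta(q^{3/2})=\Theta(n^{3/4})$, so the method yields only $\tb=\Omega(\minrank_2^{\,4/3})$; and Cayley graphs of $\Fset_2^m$ admitting a low-rank circulant representation are typically locally dense. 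No construction closing this gap is offered, and your guess that the spectral input controls the off-diagonal entries of $A^2$ is not what happens in the paper.

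The paper proceeds quite differently. It takes $G=\overline{G_k}$, the complement of the Peeters graph, on $(2^k-1)2^{k-1}$ vertices---exponential in $k$, not $\Theta(k^2)$---for which $\minrank_2(\overline{G_k})=k$ by Claim~\ref{claim:minrank(G_k_comp)}. The spectral step combines the eigenvalue computation of Lemma~\ref{lemma:lambda_n} with the Alon--Krivelevich clique lemma (Proposition~\ref{prop:(n,d,l)-exists}) to show that every subset of $V(G_k)$ of size at least $c\cdot 2^{3k/2+2r}$ contains a copy of $K_r$; taking $r=\lfloor k/8\rfloor$, every ``large'' block $N_i$ of a neighbourhood partition satisfies $\alpha(\overline{G_k}[N_i])\ge\Omega(k)$. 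A separate dimension count (Lemma~\ref{lemma:subspaces}) then shows that any partition with fewer than $k^2$ parts must contain $\Omega(k)$ large blocks, because too many vertices of $\overline{G_k}$ lie outside the neighbourhoods of any small set of senders. These two facts together give $\tb(\overline{G_k})=\Omega(k^2)$; no codegree bound is used or available.
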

As mentioned before, for a graph $G$ with no isolated vertices the length of an optimal linear embedded index code is at most $2 \cdot {\minrank}_2(G)$ (see Theorem~\ref{thm:minrk_embedded}). Hence, Theorem~\ref{thm:main} provides graphs $G$ for which there exists a linear embedded index code of length quadratically smaller than $\tb(G)$.
This implies an inherent limitation on the algorithm of~\cite{PorterW19} to embedded index coding.
Note that in contrast to the graphs given in Theorem~\ref{thm:main}, for most graphs $G$ the value of $\tb(G)$ is linear in $\minrank_2(G)$.
This follows from the fact that the minrank parameter of a typical random graph is linear in its clique cover number~\cite{Golovnev0W17}.

The proof of Theorem~\ref{thm:main} relies on an explicit graph family defined by Peeters~\cite{Peeters96} (see also~\cite{ChlamtacH14}) and on a spectral technique due to Alon and Krivelevich~\cite{AlonK97}.

\subsection{Related Work}

The index coding problem, introduced in~\cite{BirkKol98} and further developed in~\cite{BBJK06}, has been studied in various variations and extensions.
This research is motivated by applications such as distributed storage~\cite{mazumdar2014duality}, wireless communication~\cite{jafar2014topological}, and the more general problem of network coding~\cite{AhlswedeCLY00}.
The variant called embedded index coding, introduced in~\cite{PorterW19}, can be viewed as a special case of the multi-sender index coding model studied in~\cite{OngHL16} which allows multiple senders and multiple receivers but as disjoint sets of vertices (see also~\cite{LiOJ19}).
The framework of index coding studied in~\cite{PorterW19} is more general than the one considered in the current work and allows the receivers to request multiple messages.

A significant attention was given in the literature to the study of linear index coding which is characterized, as shown in~\cite{BBJK06}, by the minrank parameter (see Definition~\ref{def:minrank}). This graph parameter has been originally defined in 1979 by Haemers~\cite{Haemers79} in the study of the Shannon capacity of graphs and has later found a useful equivalent definition based on a graph family introduced by Peeters in~\cite{Peeters96} (see Section~\ref{sec:proofs_2}).
This graph family was used in~\cite{Peeters96} to obtain relations between the minrank of a graph and the chromatic number of its complement, and it was further investigated in~\cite{ChlamtacH14} where its spectral properties were involved in the analysis of an approximation algorithm for minrank based on semi-definite programming (see also~\cite{HavivTheta18}).
Our proof of Theorem~\ref{thm:main} relies on the graph family from~\cite{Peeters96} and combines its spectral properties proved in~\cite{ChlamtacH14} with a result of~\cite{AlonK97} on pseudo-random graphs.
Our approach is inspired by a work of Vinh~\cite{Vinh08a} who studied the number of orthogonal vector sets in large subsets of vector spaces over finite fields.

\subsection{Outline}

The rest of the paper is organized as follows.
In Section~\ref{sec:preliminaries}, we gather several definitions and results needed throughout the paper.
In Section~\ref{sec:proofs}, we prove Proposition~\ref{prop:upper_t(G)} and Theorem~\ref{thm:main} and provide an analogue of Theorem~\ref{thm:main} for non-linear index coding.
We end the paper in Section~\ref{sec:remarks} with a few concluding remarks.

\section{Preliminaries}\label{sec:preliminaries}

For a graph $G=(V,E)$, we let $N(i)$ denote the set of vertices in $V$ adjacent to a vertex $i \in V$. We also let $G[U]$ denote the subgraph of $G$ induced by a subset $U$ of $V$.
For an $n$-dimensional vector $x$ and a set $A \subseteq [n]$, we let $x|_A$ denote the restriction of $x$ to the indices in $A$.

\paragraph{Index coding.}
We turn to formally define the variants of the index coding problem considered in this work.
Since the graphs in our construction in Theorem~\ref{thm:main} are undirected, we restrict our attention to the undirected case.

\begin{definition}[Index Coding]
Let $G$ be a side information graph on the vertex set $[n]$.
\begin{enumerate}
  \item\label{itm:1} A {\em linear index code} of length $\ell$ for $G$ is a linear encoding function $E: \Fset_2^n \rightarrow \Fset_2^\ell$ for which there exist $n$ linear decoding functions $D^{(i)}: \Fset_2^{\ell + |N(i)|} \rightarrow \Fset_2$ ($i \in [n]$) such that the following holds: For all $x \in \Fset_2^n$ and $i \in [n]$, $D^{(i)}(E(x),x|_{N(i)}) = x_i$.
  \item\label{itm:2} A {\em linear embedded index code} of length $\ell$ for $G$ is a collection of linear encoding functions $E^{(j)}: \Fset_2^{|N(j)|} \rightarrow \Fset_2^{\ell_j}$ ($j \in S$ for some $S \subseteq [n]$) where $\ell = \sum_{j \in S}{\ell_j}$, for which there exist $n$ linear decoding functions $D^{(i)}: \Fset_2^{\ell + |N(i)|} \rightarrow \Fset_2$ ($i \in [n]$) such that the following holds: For all $x \in \Fset_2^n$ and $i \in [n]$, $D^{(i)}( (E^{(j)}(x|_{N(j)}))_{j \in S},x|_{N(i)}) = x_i$.
  \item\label{itm:3} A {\em linear task-based embedded index code} of length $\ell$ for $G$ is a collection of linear encoding functions $E^{(j)}: \Fset_2^{|N(j)|} \rightarrow \Fset_2^{\ell_j}$ ($j \in S$ for some $S \subseteq [n]$) where $\ell = \sum_{j \in S}{\ell_j}$, for which there exist indices $j_1,\ldots,j_n \in S$ and $n$ linear decoding functions $D^{(i)}: \Fset_2^{\ell_{j_i} + |N(i)|} \rightarrow \Fset_2$ ($i \in [n]$) such that the following holds: For all $x \in \Fset_2^n$ and $i \in [n]$, $D^{(i)}( E^{(j_i)}(x|_{N(j_i)}),x|_{N(i)}) = x_i$.
\end{enumerate}
\end{definition}

\begin{remark}
Note that a graph with isolated vertices cannot have an embedded index code.
\end{remark}

The minrank parameter over $\Fset_2$ is defined as follows.

\begin{definition}\label{def:minrank}
Let $G=([n],E)$ be a directed graph.
We say that an $n$ by $n$ matrix $M$ over $\Fset_2$ {\em represents} $G$ if $M_{i,i} \neq 0$ for every $i \in [n]$, and $M_{i,j}=0$ for every distinct $i,j \in [n]$ such that $(i,j) \notin E$.
The {\em minrank} of $G$ over $\Fset_2$ is defined as
\[{\minrank}_2(G) =  \min\{{\rank}_{\Fset_2}(M)\mid M \mbox{ represents }G \}.\]
The definition is naturally extended to undirected graphs by replacing every undirected edge with two oppositely directed edges.
\end{definition}
\noindent
Notice that every (undirected) graph $G$ satisfies ${\minrank}_2(G) \geq \alpha(G)$, where $\alpha(G)$ stands for the independence number of $G$.

The minimum length of a linear index code (Definition~\ref{def:minrank}, Item~\ref{itm:1}) was characterized by the minrank parameter in~\cite{BBJK06}.
\begin{theorem}[\cite{BBJK06}]
For every graph $G$, the minimum length of a linear index code for $G$ is ${\minrank}_2(G)$.
\end{theorem}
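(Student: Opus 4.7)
The plan is to prove the equality by establishing two matching inequalities via explicit translations between representing matrices and linear encoders.

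For the direction that the minimum length is at most $\minrank_2(G)$, I would start from an arbitrary matrix $M$ representing $G$ with $\rank_{\Fset_2}(M) = r$ and broadcast a basis of its row space. Concretely, fix an $r \times n$ matrix $B$ whose rows span the row space of $M$ and set $E(x) = Bx$, a linear encoding of length $r$. Since each row $M_i$ of $M$ is a fixed $\Fset_2$-linear combination of rows of $B$, receiver $i$ can compute $M_i x$ from $Bx$. By the representing property, $M_{i,i} = 1$ and $M_{i,j} = 0$ for every $j \notin N(i) \cup \{i\}$, so $M_i x = x_i + \sum_{j \in N(i)} M_{i,j}\, x_j$; subtracting the correction using the side information $x|_{N(i)}$ recovers $x_i$ linearly. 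This direction is essentially a repackaging of the definition.

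For the converse direction, I would reverse-engineer a representing matrix from any linear index code of length $\ell$. Given an encoding matrix $C \in \Fset_2^{\ell \times n}$ together with linear decoders $D^{(i)}$, the identity $D^{(i)}(Cx, x|_{N(i)}) = x_i$ holding for every $x \in \Fset_2^n$ forces an equality of linear functionals on $\Fset_2^n$, which unpacks to coefficients $a^{(i)} \in \Fset_2^\ell$ and $b^{(i)}_j \in \Fset_2$ (for $j \in N(i)$) satisfying $e_i = C^T a^{(i)} + \sum_{j \in N(i)} b^{(i)}_j e_j$. Rearranging puts the vector $m_i := e_i - \sum_{j \in N(i)} b^{(i)}_j e_j$ in the row span of $C$. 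Stacking these $n$ vectors as rows of an $n \times n$ matrix $M$ yields a representation of $G$, since its $(i,i)$-entry is $1$ and its $(i,j)$-entry vanishes whenever $j \neq i$ and $j \notin N(i)$. As every row of $M$ lies in the $\ell$-dimensional row span of $C$, we conclude $\rank_{\Fset_2}(M) \leq \ell$, and hence $\minrank_2(G) \leq \ell$.

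The single subtle step I anticipate is the conversion from a linear decoder to the explicit coefficients that witness $m_i$. It uses crucially both the linearity of $D^{(i)}$ in its two arguments and the requirement that the decoding identity holds for every input $x$, which together force a coefficient-wise equality of linear functionals on $\Fset_2^n$. Once this algebraic identity is in hand, both the verification that the assembled matrix represents $G$ and the rank bound are immediate, completing the matching inequalities.
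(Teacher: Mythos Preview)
Your argument is correct and is precisely the standard proof of this characterization. Note, however, that the paper does not supply its own proof of this theorem: it is stated as a cited result from~\cite{BBJK06}, so there is no in-paper argument to compare against; your two-direction translation between representing matrices and linear encoders is exactly the original proof.
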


The minimum length of a linear embedded index code (Definition~\ref{def:minrank}, Item~\ref{itm:2}) was bounded in~\cite{PorterW19} from above using the minrank parameter.

\begin{theorem}[\cite{PorterW19}]\label{thm:minrk_embedded}
For every graph $G$ with no isolated vertices, the minimum length of a linear embedded index code for $G$ is at most $2 \cdot {\minrank}_2(G)$.
\end{theorem}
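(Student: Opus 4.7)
The plan is to start from any rank-$k$ representation of $G$ and distribute the centralized broadcast $Mx$ into $2k$ single-bit pieces, each local to a single vertex. Set $k=\minrank_2(G)$, let $M\in\Fset_2^{n\times n}$ represent $G$ with $\rank_{\Fset_2}(M)=k$, and write $r_i$ for the $i$-th row of $M$. Since the row space of $M$ has dimension $k$, I will fix indices $I=\{i_1,\ldots,i_k\}\subseteq[n]$ for which $r_{i_1},\ldots,r_{i_k}$ form a basis of that row space, and for each $t\in[k]$ I will fix a neighbor $w_t\in N(i_t)$; such a $w_t$ exists because $G$ has no isolated vertices.

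The code will consist of $2k$ one-bit contributions. For each $t\in[k]$, vertex $i_t$ will broadcast
\[
\alpha_t \;=\; \sum_{j\in N(i_t)} M_{i_t,j}\,x_j\;\in\;\Fset_2,
\]
which depends only on $x|_{N(i_t)}$, and vertex $w_t$ will broadcast $\beta_t=x_{i_t}\in\Fset_2$, which is legitimate because $i_t\in N(w_t)$. Using that $M$ represents $G$ (so $r_{i_t}=e_{i_t}+\sum_{j\in N(i_t)}M_{i_t,j}\,e_j$), one immediately obtains $\alpha_t+\beta_t=(Mx)_{i_t}$, so the receivers jointly see the $k$ entries $\{(Mx)_{i_t}\}_{t=1}^k$.

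For the decoding step I will use that the rows $r_{i_1},\ldots,r_{i_k}$ span the entire row space of $M$, so for every $\ell\in[n]$ there are fixed coefficients $c_{\ell,t}\in\Fset_2$ with $r_\ell=\sum_t c_{\ell,t}\,r_{i_t}$, yielding $(Mx)_\ell=\sum_t c_{\ell,t}(\alpha_t+\beta_t)$. Since $M$ represents $G$, one also has $(Mx)_\ell=x_\ell+\sum_{j\in N(\ell)}M_{\ell,j}\,x_j$, so receiver $\ell$ extracts $x_\ell$ as an $\Fset_2$-linear function of the broadcasts and its side information $x|_{N(\ell)}$, as the embedded definition demands. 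The total length is $k+k=2k$ and the code is $\Fset_2$-linear. I do not anticipate a real obstacle: the hypothesis that $G$ has no isolated vertex is used only to produce each $w_t$, and the locality of $\alpha_t$ is automatic from the fact that the support of $r_{i_t}$ is contained in $\{i_t\}\cup N(i_t)$; the whole argument reduces to the observation that $Mx$ lies in the $k$-dimensional $\Fset_2$-span of $\{(Mx)_{i_t}\}_{t=1}^k$, with each $(Mx)_{i_t}$ splitting as a sum of two vertex-local bits.
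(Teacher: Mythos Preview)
The paper does not supply its own proof of this theorem; it is quoted from~\cite{PorterW19} and used as a black box. That said, your argument is correct and essentially the natural one: pick a basis $\{r_{i_1},\ldots,r_{i_k}\}$ of the row space of a rank-$k$ representing matrix $M$, have each $i_t$ transmit the neighbor-supported part $\alpha_t$ of $(Mx)_{i_t}$, and have a neighbor $w_t$ of $i_t$ transmit the missing diagonal bit $\beta_t=x_{i_t}$; then $\alpha_t+\beta_t=(Mx)_{i_t}$, and since every row of $M$ is an $\Fset_2$-combination of the chosen basis rows, each receiver recovers $(Mx)_\ell$ and hence $x_\ell$ linearly from the $2k$ broadcast bits together with $x|_{N(\ell)}$. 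The only point worth making explicit is that the same vertex may appear several times among the $i_t$'s and $w_t$'s, in which case its encoding function simply outputs all the corresponding bits; the total length remains $2k$, matching the definition's $\sum_{j\in S}\ell_j$.
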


A {\em neighborhood partition} of a graph $G=(V,E)$ is a partition $(N_i)_{i \in S}$ of the vertex set $V$, where $S \subseteq V$ and $\emptyset \neq N_i \subseteq N(i)$ for all $i \in S$.
For a graph $G$, let $\tb(G)$ denote the minimum length of a linear task-based embedded index code  for $G$ (Definition~\ref{def:minrank}, Item~\ref{itm:3}). This quantity was characterized in~\cite{PorterW19} using the minrank parameter and the notion of neighborhood partitions.

\begin{lemma}[\cite{PorterW19}]\label{lemma:t(G)_eq_def}
For every graph $G$ with no isolated vertices, $\tb(G)$ is the minimum of
\[\sum_{i \in S}{{\minrank}_2(G[N_i])}\]
over all possible neighborhood partitions $(N_i)_{i \in S}$ of $G$.
\end{lemma}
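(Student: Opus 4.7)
The plan is to prove the equality by establishing the two inequalities separately. The bridge in both directions is to identify, for each sender $j$ of a task-based code, the subset $N_j \subseteq N(j)$ of receivers it serves, and to match this with the partition viewpoint.

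For the upper bound, I would start from a neighborhood partition $(N_j)_{j \in S}$ of $G$ and construct a task-based code of length $\sum_{j \in S} \minrank_2(G[N_j])$. For each $j \in S$, fix an optimal linear index code for the induced subgraph $G[N_j]$, given by a linear encoder $\tilde E^{(j)} \colon \Fset_2^{|N_j|} \to \Fset_2^{\minrank_2(G[N_j])}$ and linear decoders $\tilde D^{(i)}$ for $i \in N_j$. Since $N_j \subseteq N(j)$, the sender $j$ may broadcast $E^{(j)}(x|_{N(j)}) := \tilde E^{(j)}(x|_{N_j})$, which is a linear function of its input. For each receiver $i$ let $j_i$ be the unique $j \in S$ with $i \in N_j$, and take $D^{(i)}$ to apply $\tilde D^{(i)}$ to $E^{(j_i)}$'s broadcast and to the restriction $x|_{N(i) \cap N_{j_i}}$ of the side information (which is determined by $x|_{N(i)}$). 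Correctness is immediate and every constituent map is linear, giving $\tb(G) \leq \sum_{j \in S} \minrank_2(G[N_j])$ for every neighborhood partition.

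For the matching lower bound, start from an optimal linear task-based code with senders $S$, encoders $E^{(j)}$ of lengths $\ell_j$, and receiver assignments $(j_i)_{i \in [n]}$, and set $N_j := \{i \in [n] : j_i = j\}$, discarding any $j$ with $N_j = \emptyset$. I would first verify that $(N_j)_{j \in S}$ is a neighborhood partition: the $N_j$'s cover $[n]$ disjointly by construction, and $N_j \subseteq N(j)$ because otherwise some $i \in N_j$ would satisfy both $i \notin N(j)$ and $i \notin N(i)$, so flipping the bit $x_i$ alone would leave both $E^{(j)}(x|_{N(j)})$ and $x|_{N(i)}$ unchanged, contradicting correctness of $D^{(i)}$. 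Then for each $j \in S$ I would restrict to inputs supported on $N_j$: for $y \in \Fset_2^{N_j}$ let $\hat y \in \Fset_2^n$ be its zero-extension to $[n]$, and define $\tilde E^{(j)}(y) := E^{(j)}(\hat y|_{N(j)})$ together with decoders $\tilde D^{(i)}$ ($i \in N_j$) obtained from $D^{(i)}$ by zero-padding the side information on the coordinates in $N(i) \setminus N_j$. This yields a linear index code for $G[N_j]$ of length $\ell_j$, so $\ell_j \geq \minrank_2(G[N_j])$, and summing gives $\tb(G) = \sum_{j \in S} \ell_j \geq \sum_{j \in S} \minrank_2(G[N_j])$.

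The only delicate point is the verification that $N_j \subseteq N(j)$, which rests on the fact that a decoder, linear or not, cannot recover $x_i$ from data that is formally independent of $x_i$. The remainder is a bookkeeping translation between the two formulations; linearity is preserved throughout because we only compose linear maps with coordinate restriction and zero-extension, and the existence of a neighborhood partition in the first place is guaranteed by the hypothesis that $G$ has no isolated vertices (each vertex can be assigned to some chosen neighbor).
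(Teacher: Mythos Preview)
The paper does not supply its own proof of this lemma; it is quoted from \cite{PorterW19} without argument. Your two-inequality proof is correct and is exactly the natural argument one would expect: build a task-based code from a neighborhood partition by running an optimal centralized code on each $G[N_j]$ (possible because $N_j \subseteq N(j)$), and conversely read off a neighborhood partition from the receiver-to-sender assignment $(j_i)_i$ of an optimal task-based code, using the information-theoretic observation that $i \in N_{j_i} \setminus N(j_i)$ would make $x_i$ invisible to $D^{(i)}$. The bookkeeping with zero-extensions and restrictions is handled correctly and linearity is preserved throughout.
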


\section{Proofs}\label{sec:proofs}

\subsection{Proof of Proposition~\ref{prop:upper_t(G)}}

A {\em dominating set} in a graph $G=(V,E)$ is a subset $D \subseteq V$ of the vertex set such that every vertex of $G$ either belongs to $D$ or is adjacent to a vertex of $D$. Let $\gamma(G)$ denote the minimum size of a dominating set in a graph $G$.
We prove the following bound.

\begin{proposition}\label{prop:upper_t(G)_full}
For every graph $G$ with no isolated vertices, $\tb(G) \leq \gamma(G) \cdot ({\minrank}_2(G)+1)$.
\end{proposition}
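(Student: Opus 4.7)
The plan is to invoke Lemma~\ref{lemma:t(G)_eq_def} and exhibit a concrete neighborhood partition whose associated minranks sum to at most $\gamma(G) \cdot ({\minrank}_2(G)+1)$.

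First I would fix a minimum dominating set $D$ with $|D|=\gamma(G)$, and define the ``bad'' vertices $B = \{v \in D : N(v) \cap D = \emptyset\}$. These are exactly the members of $D$ that a priori have no one inside $D$ capable of serving them under the $N_i \subseteq N(i)$ constraint imposed by Lemma~\ref{lemma:t(G)_eq_def}. Everyone else can be served from $D$: for $v \notin D$ the dominating property gives $N(v) \cap D \neq \emptyset$, while for $v \in D \setminus B$ the definition of $B$ gives $N(v) \cap D \neq \emptyset$ as well.

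Next I would pick an assignment $\sigma : V \to V$ with $\sigma(v) \in N(v)$ for every $v$ (this is where the no-isolated-vertices hypothesis is used, through $N(v) \neq \emptyset$), insisting that $\sigma(v) \in D$ whenever $v \notin B$ and merely $\sigma(v) \in N(v)$ for $v \in B$ (so that $\sigma(v) \notin D$ in that case). Set $S = \sigma(V)$ and $N_i = \sigma^{-1}(i)$ for $i \in S$. Since $\sigma(v) \in N(v)$ exactly means $v \in N(\sigma(v))$, the family $(N_i)_{i \in S}$ is a neighborhood partition of $V$ in the sense of Section~\ref{sec:preliminaries}.

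Then I would estimate $\sum_{i \in S} {\minrank}_2(G[N_i])$ by splitting $S = D \cup (S \setminus D)$. For $i \in D$, the monotonicity of minrank under induced subgraphs (obtained by restricting any representing matrix to the rows and columns of $N_i$) yields ${\minrank}_2(G[N_i]) \leq {\minrank}_2(G)$, contributing at most $\gamma(G) \cdot {\minrank}_2(G)$. For $i \in S \setminus D$, by construction every $v \in N_i$ satisfies $v \in B$, so $N_i \subseteq B$; using the trivial bound ${\minrank}_2(G[N_i]) \leq |N_i|$ and summing over $i \in S \setminus D$ gives at most $|B| \leq |D| = \gamma(G)$, because the $N_i$'s are pairwise disjoint and $B \subseteq D$. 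Adding the two contributions yields the bound, and Lemma~\ref{lemma:t(G)_eq_def} finishes the proof.

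The only real subtlety is realizing that $D$ need not be a total dominating set, which would have permitted the naive bound $\gamma(G) \cdot {\minrank}_2(G)$; the extra $\gamma(G)$ in the statement is precisely what is needed to absorb the bad vertices $B$ via the trivial $|N_i|$ bound instead of the $\minrank_2(G)$ bound. No further estimates are required.
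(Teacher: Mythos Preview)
Your proof is correct and follows essentially the same approach as the paper: both take a minimum dominating set $D$, cover $V \setminus B$ via neighbours in $D$ (the paper does this by a greedy ordering $N_{i_j} = N(i_j)\setminus\bigcup_{h<j}N(i_h)$, you do it via the assignment $\sigma$), and then absorb the leftover set $B=\{v\in D: N(v)\cap D=\emptyset\}$ at additional cost at most $|B|\le\gamma(G)$. The only cosmetic slip is writing ``$S = D \cup (S\setminus D)$'' when you mean $S = (S\cap D)\cup(S\setminus D)$, but your bound $|S\cap D|\le |D|=\gamma(G)$ is what you actually use, so the argument goes through.
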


\begin{proof}
Let $G=(V,E)$ be a graph.
Consider a dominating set $D \subseteq V$ in $G$ of minimum size and denote its vertices by $D = \{i_1, \ldots, i_{d}\}$ where $d= \gamma(G)$.
We define a neighborhood partition of $G$ as follows. First, for every $j \in [d]$ consider the set $N_{i_j} = N(i_j) \setminus \cup_{h<j}{N(i_h)}$, that is, the set of vertices that are adjacent to $i_j$ but not to any $i_h$ with $h<j$. Next, since $D$ is a dominating set, the only vertices that have not been covered yet belong to $D$, and since $G$ has no isolated vertices they can be partitioned into sets $(N_i)_{i \in S}$ for some $S \subseteq V \setminus D$ where $\sum_{i \in S}{|N_i|} \leq d$ and $N_i \subseteq N(i)$ for all $i \in S$.
It follows that the sets of $(N_{i_j})_{j \in [d]}$ and $(N_i)_{i \in S}$ (omitting empty sets, if any) form a neighborhood partition of $G$.
Moreover, we clearly have ${\minrank}_2(G[N_{i_j}]) \leq {\minrank}_2(G)$ for every $j \in [d]$ and ${\minrank}_2(G[N_{i}]) \leq |N_i|$ for every $i \in S$. By Lemma~\ref{lemma:t(G)_eq_def}, it follows that
\begin{eqnarray*}
\tb(G) &\leq& \sum_{j \in [d]}{{\minrank}_2(G[N_{i_j}])} + \sum_{i \in S}{{\minrank}_2(G[N_i])} \\
&\leq& d \cdot {\minrank}_2(G)+\sum_{i \in S}{|N_i|} \leq d \cdot ({\minrank}_2(G)+1),
\end{eqnarray*}
and we are done.
\end{proof}

The assertion of Proposition~\ref{prop:upper_t(G)} follows now easily.
\begin{proof}[ of Proposition~\ref{prop:upper_t(G)}]
Observe that every graph $G$ satisfies $\gamma(G) \leq \alpha(G) \leq {\minrank}_2(G)$.
By Proposition~\ref{prop:upper_t(G)_full}, it follows that a graph $G$ with no isolated vertices satisfies $\tb(G) \leq O({\minrank}_2(G)^2)$, as desired.
\end{proof}

\subsection{Proof of Theorem~\ref{thm:main}}\label{sec:proofs_2}

The proof of Theorem~\ref{thm:main} relies on a graph family introduced in~\cite{Peeters96}, defined as follows.

\paragraph{The Graph Family $G_k$.}

For an integer $k \geq 1$ we define the (undirected) graph $G_k=(V,E)$ on the vertex set
\[V = \{(u,v) \in \Fset_2^k \times \Fset_2^k\mid \langle u,v \rangle = 1 \},\]
in which two vertices $(u_1,v_1)$ and $(u_2,v_2)$ are adjacent if and only if $\langle u_1,v_2 \rangle = \langle v_1,u_2 \rangle = 0$.
Observe that $|V| = (2^k-1)\cdot 2^{k-1}$ and that $G_k$ is regular with degree $(2^{k-1}-1) \cdot 2^{k-2}$.
It is easy to show that the minrank over $\Fset_2$ of the complement $\overline{G_k}$ of the graph $G_k$ is precisely $k$.

\begin{claim}\label{claim:minrank(G_k_comp)}
For every $k \geq 1$, $\alpha(\overline{G_k}) = {\minrank}_2(\overline{G_k}) = k$.
\end{claim}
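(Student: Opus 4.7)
The plan is to use the general bound $\minrank_2(H) \geq \alpha(H)$ noted in the excerpt to split the claim into two inequalities, $\alpha(\overline{G_k}) \geq k$ and $\minrank_2(\overline{G_k}) \leq k$. Together with the chain $k \leq \alpha(\overline{G_k}) \leq \minrank_2(\overline{G_k}) \leq k$, these two bounds force equality throughout and yield both stated equalities at once.

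For the lower bound on $\alpha(\overline{G_k})$, I would exhibit an independent set of size $k$ in $\overline{G_k}$, or equivalently a clique of size $k$ in $G_k$. The natural candidate is the family $\{(e_i, e_i) \mid i \in [k]\}$, where $e_i \in \Fset_2^k$ is the $i$th standard basis vector. The vertex condition $\langle e_i, e_i \rangle = 1$ holds trivially, and for distinct $i,j$ both required inner products $\langle e_i, e_j \rangle$ and $\langle e_j, e_i \rangle$ vanish, certifying adjacency in $G_k$.

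For the upper bound on $\minrank_2(\overline{G_k})$, the task is to write down an $n \times n$ matrix $M$ (with $n = |V(G_k)|$) representing $\overline{G_k}$ in the sense of Definition~\ref{def:minrank} with $\rank_{\Fset_2}(M) \leq k$. Guided by the one-sided part of the adjacency relation of $G_k$, I would set
\[
M_{(u_1,v_1),(u_2,v_2)} = \langle u_1, v_2 \rangle.
\]
The diagonal entries equal $\langle u, v \rangle = 1$ by the vertex condition, hence are nonzero. For any pair of distinct vertices forming a non-edge of $\overline{G_k}$, i.e., an edge of $G_k$, both $\langle u_1, v_2 \rangle = 0$ and $\langle u_2, v_1 \rangle = 0$ hold, so the two corresponding off-diagonal entries of $M$ vanish as required. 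Finally, $M$ factors as $UV^T$, where $U,V \in \Fset_2^{n \times k}$ are the vertex-indexed matrices whose rows are the $u$'s and $v$'s, so $\rank_{\Fset_2}(M) \leq k$.

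The argument is essentially a direct unfolding of the definitions, so I do not anticipate a substantive obstacle. The only real design choice is the bilinear form $\langle u_1, v_2 \rangle$ used to define $M$; it is suggested by the observation that the edge relation of $G_k$ enforces $\langle u_1, v_2 \rangle = 0$ and $\langle v_1, u_2 \rangle = 0$ separately, so either ``half'' already suffices for the vanishing of one off-diagonal entry while the other half takes care of the transposed entry.
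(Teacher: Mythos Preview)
Your proposal is correct and matches the paper's proof essentially line for line: the independent set $\{(e_i,e_i)\}_{i\in[k]}$ gives $\alpha(\overline{G_k})\ge k$, and your matrix $M$ with $M_{(u_1,v_1),(u_2,v_2)}=\langle u_1,v_2\rangle$ is exactly the paper's $M_1^T M_2$, yielding $\minrank_2(\overline{G_k})\le k$. The chain $k\le\alpha\le\minrank_2\le k$ then forces equality, just as in the paper.
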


\begin{proof}
By the definition of $G_k=(V,E)$, every vertex $x \in V$ is associated with a pair $(u_x,v_x) \in \Fset_2^k \times \Fset_2^k$ satisfying $\langle u_x,v_x \rangle =1$.
Let $M_1$ and $M_2$ be the $k \times |V|$ matrices over $\Fset_2$ with columns indexed by $V$, such that the column associated with vertex $x$ in $M_1$ consists of the vector $u_x$ and the column associated with it in $M_2$ consists of the vector $v_x$.
The matrix $M = M_1^T \cdot M_2$ represents the graph $\overline{G_k}$, because for every $x \in V$ we have $\langle u_x,v_x \rangle =1$ whereas every distinct vertices $(u_x,v_x)$ and $(u_y,v_y)$ that are not adjacent in $\overline{G_k}$ (i.e., adjacent in $G_k$) satisfy $\langle u_x,v_y \rangle = \langle u_y,v_x \rangle = 0$. Since $M$ has rank at most $k$ over $\Fset_2$, it follows that ${\minrank}_2(\overline{G_k}) \leq k$.
On the other hand, the set of vertices $\{(e_i,e_i)\}_{i \in [k]}$, where $e_i$ denotes the vector in $\Fset_2^k$ that has a nonzero entry only in the $i$th coordinate, forms an independent set in $\overline{G_k}$. Since the size of an independent set in a graph bounds from below its minrank, we obtain $k \leq \alpha(\overline{G_k})  \leq {\minrank}_2(\overline{G_k}) \leq k$, and we are done.
\end{proof}

The graph family $G_k$ can be used to provide an alternative definition for the minrank parameter over $\Fset_2$. Indeed, it is straightforward to verify that for every graph $G$, ${\minrank}_2(G)$ is the smallest integer $k$ for which there exists a homomorphism from $\overline{G}$ to $G_k$.
This means, in a sense, that the graph $\overline{G_k}$ captures the structure of all graphs with minrank $k$, and as such, it is natural to consider it for obtaining a graph $G$ with minrank $k$ and yet a significantly larger $\tb(G)$.
This is precisely the approach taken in our proof of Theorem~\ref{thm:main}.
To prove the lower bound on $\tb(\overline{G_k})$ we show, roughly speaking, that every economical neighborhood partition of $\overline{G_k}$ includes $\Omega(k)$ vertices $i$ associated with a `large' neighborhood $N_i$. For those vertices $i$, it is shown that the subgraph of $\overline{G_k}$ induced by $N_i$ contains an independent set of size linear in $k$, implying that its minrank is linear in $k$ as well. This yields, using Lemma~\ref{lemma:t(G)_eq_def}, that the length of any linear task-based embedded index code for $\overline{G_k}$ is at least of order $k^2$. The existence of large independent sets in the induced subgraphs of $\overline{G_k}$ is proved by a spectral technique, described next.

The graph $G_k$ was shown in~\cite{ChlamtacH14} to be vertex-transitive and edge-transitive. Moreover, the strong symmetry properties of $G_k$ were used there to exactly determine its eigenvalues (i.e., the eigenvalues of its adjacency matrix).

\begin{lemma}[\cite{ChlamtacH14}]\label{lemma:lambda_n}
For every $k \geq 3$, the second largest eigenvalue of $G_k$ in absolute value is $2^{3k/2-3}$.
\end{lemma}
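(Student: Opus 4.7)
The plan is to exploit the strong symmetries of $G_k$ combined with a character decomposition of its adjacency matrix. Using the identity $[z=0]=\tfrac{1}{2}(1+(-1)^z)$ for $z\in\Fset_2$, the edge indicator of $G_k$ expands as
\[
A_{(u_1,v_1),(u_2,v_2)} \;=\; \tfrac{1}{4}\bigl(1+(-1)^{\langle u_1,v_2\rangle}\bigr)\bigl(1+(-1)^{\langle v_1,u_2\rangle}\bigr),
\]
so the adjacency matrix splits as $A=\tfrac{1}{4}(J+M_1+M_2+M_{12})$, where $J$ is the all-ones matrix and $M_1, M_2, M_{12}$ are $\pm 1$ matrices coming from the three bilinear forms above. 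The relation $\langle u_x,v_x\rangle=1$ on the diagonal automatically enforces $A_{xx}=0$, and the degree eigenvalue $d=(2^{k-1}-1)\cdot 2^{k-2}$ arises from the contribution of $J$ on the all-ones vector.

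First I would factor each of $M_1, M_2, M_{12}$ in the form $PHQ^T$, where $H$ is a $2^k\times 2^k$ Hadamard matrix and $P,Q$ are $|V|\times 2^k$ indicator matrices encoding the projections $(u,v)\mapsto u$ and $(u,v)\mapsto v$. Each such product has rank at most $2^k$, and its nonzero singular values can be read off from the counts of vertices lying over each pair of projected vectors. This reduces the problem to a small collection of explicit Gram-type matrices whose spectra can be computed in closed form.

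Second, I would use the natural action of $GL_k(\Fset_2)$ on $V$ given by $T\cdot(u,v)=(Tu,(T^T)^{-1}v)$. A direct check shows this action preserves $\langle u,v\rangle$ and hence the adjacency relation, and it is both vertex- and edge-transitive. Since $A$ commutes with this action, its eigenspaces are $GL_k(\Fset_2)$-invariant, and the eigenvalues and their multiplicities are determined by restricting $A$ to each isotypic component of the permutation representation on $V$. The trivial component contributes $d$, while the remaining components are orthogonal to the all-ones vector and receive nontrivial contributions only from $M_1, M_2, M_{12}$.

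The main obstacle is the accounting in this last step: on each non-trivial invariant subspace one must combine the three Hadamard-type contributions and identify the largest absolute value among the resulting eigenvalues, which the claim asserts is exactly $2^{3k/2-3}$. This value matches $\sqrt{d}$ up to a constant (since $d=\Theta(2^{2k-3})$), placing $G_k$ in the pseudo-random regime required for the Alon--Krivelevich argument invoked later. A useful consistency check is the trace identity $\sum_i \lambda_i^2 = \trace(A^2) = |V|\cdot d$, which constrains the total spectral mass outside the trivial eigenvalue and rules out substantially larger non-trivial eigenvalues.
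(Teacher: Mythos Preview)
The paper does not supply a proof of this lemma: it is imported as a black box from \cite{ChlamtacH14}, so there is no in-paper argument for you to match. Your outline is a reasonable plan of attack---the character expansion $A=\tfrac14(J+M_1+M_2+M_{12})$ is correct, and the $GL_k(\Fset_2)$-action you describe does preserve adjacency---but it remains only a plan, as you yourself concede. One point to be aware of is that $M_1$, $M_2$, $M_{12}$ do not pairwise commute (for instance $M_1M_2-M_2M_1=2^{2k-1}(PP^T-QQ^T)\neq 0$, the latter having $(x,y)$-entry $[u_x=u_y]-[v_x=v_y]$), so their individual spectra do not combine additively into the spectrum of $A$. Determining the isotypic decomposition of the permutation module on $V$ under $GL_k(\Fset_2)$ and the action of $A$ on each multiplicity space is precisely the substantive computation carried out in \cite{ChlamtacH14}, and nothing in your sketch bypasses it.

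Your heuristic check also contains a concrete error. You assert that $2^{3k/2-3}$ ``matches $\sqrt d$ up to a constant'', but $d=(2^{k-1}-1)\cdot 2^{k-2}\sim 2^{2k-3}$ gives $\sqrt d\sim 2^{k-3/2}$, so the ratio $\lambda/\sqrt d$ is of order $2^{k/2}$ and $G_k$ is far from Ramanujan. The graph is pseudo-random only in the weaker sense $\lambda=o(d)$, which is all that Proposition~\ref{prop:(n,d,l)-exists} needs. Likewise, the trace identity $\sum_{i\ge 2}\lambda_i^2=nd-d^2\sim 3\cdot 2^{4k-6}$ bounds $|\lambda_2|$ only by something of order $d\sim 2^{2k-3}$; it cannot distinguish the true value $2^{3k/2-3}$ from anything up to that scale, so it does not ``rule out substantially larger non-trivial eigenvalues'' in any range relevant to the claim.
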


An {\em $(n,d,\lambda)$-graph} is a $d$-regular graph on $n$ vertices in which all eigenvalues, but the largest one, are of absolute value at most $\lambda$.
It is well known that $(n,d,\lambda)$-graphs with $\lambda$ much smaller than $d$ have various pseudo-random properties (see, e.g.,~\cite{KrivelevichS2006}).
In particular, the following result from~\cite{AlonK97} says that every sufficiently large subset of the vertex set of an $(n,d,\lambda)$-graph contains a large complete graph.

\begin{proposition}[\cite{AlonK97}]\label{prop:(n,d,l)-exists}
Let $G$ be an $(n,d,\lambda)$-graph. Then for every integer $r \geq 2$ and every subset $U$ of the vertex set of $G$ satisfying
\[ |U| > \frac{(\lambda+1)n}{d} \cdot \Big ( 1+\frac{n}{d}+\cdots+\Big (\frac{n}{d} \Big )^{r-2}\Big ),\]
the graph $G[U]$ contains a copy of the complete graph $K_r$.
\end{proposition}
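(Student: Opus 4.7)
The plan is to prove the proposition by induction on $r \geq 2$, with the expander mixing lemma as the central tool. Recall that the spectral bound defining an $(n,d,\lambda)$-graph implies that for every vertex subset $W$, the number of edges $e(G[W])$ in the induced subgraph satisfies $\bigl|\, 2 \, e(G[W]) - d|W|^2/n \,\bigr| \leq \lambda |W|$, so in particular the sum of degrees of $G[W]$ inside $W$ is at least $d|W|^2/n - \lambda |W|$. Denote the threshold appearing in the proposition by $f(r) := \frac{(\lambda+1)n}{d} \cdot \sum_{j=0}^{r-2} (n/d)^j$, and observe the recursive identity $f(r) = (\lambda+1)n/d + (n/d) \cdot f(r-1)$ valid for $r \geq 3$.

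For the base case $r=2$, the hypothesis $|U| > f(2) = (\lambda+1)n/d$ rearranges to $d|U|/n - \lambda > 1$, so the expander mixing bound gives $2 \, e(G[U]) \geq (d|U|/n - \lambda)\,|U| > |U| > 0$. Hence $G[U]$ contains an edge, which is a copy of $K_2$. For the inductive step with $r \geq 3$, assume the statement for $r-1$ and fix $U$ with $|U| > f(r)$. The recursion rewrites as $d \, f(r)/n = (\lambda+1) + f(r-1)$, so $d|U|/n > (\lambda+1) + f(r-1)$, and thus the average degree of $G[U]$ inside $U$ is at least $d|U|/n - \lambda > 1 + f(r-1) > f(r-1)$. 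Consequently some vertex $v \in U$ satisfies $|N(v) \cap U| > f(r-1)$. Applying the induction hypothesis to the subset $W := N(v) \cap U$ of $V(G)$ yields a copy of $K_{r-1}$ inside $G[W]$, and adjoining $v$ to this clique produces the required $K_r$ in $G[U]$.

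The main obstacle is really just invoking the expander mixing lemma correctly; it is a standard consequence of the spectral hypothesis: expanding the indicator vector $\chi_U$ in an orthonormal eigenbasis of the adjacency matrix of $G$, the contribution of the principal (all-ones) eigenvector yields the main term $d|U|^2/n$, while the contribution of the eigenvectors orthogonal to the principal one is bounded in absolute value by $\lambda \cdot \|\chi_U\|_2^2 = \lambda |U|$, since each such eigenvalue has absolute value at most $\lambda$ and the squared coefficients sum to at most $|U|$. Once this bound is in hand the induction is clean, because $f(r)$ is tailored so that the deficit $d|U|/n - \lambda$ remaining after the mixing lemma is guaranteed to exceed $f(r-1)$, letting us pass from $r$ to $r-1$ without any slack being lost.
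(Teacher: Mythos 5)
The paper does not prove Proposition~\ref{prop:(n,d,l)-exists}; it is imported verbatim from Alon and Krivelevich \cite{AlonK97}, so there is no in-paper argument to compare against. Your proof is correct and is in fact the standard argument from that source: the expander mixing bound $2e(G[U]) \geq d|U|^2/n - \lambda|U|$ gives a vertex of high degree inside $U$, the recursion $f(r) = (\lambda+1)n/d + (n/d)\,f(r-1)$ is exactly tuned so that this neighborhood exceeds the threshold for $r-1$, and induction finishes the job.
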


Applying Proposition~\ref{prop:(n,d,l)-exists} to the graph $G_k$, we obtain the following result.

\begin{lemma}\label{lemma:K_r-in-G_k}
There exists a constant $c>0$ such that for all integers $k \geq 3$ and $r \geq 2$ and for every subset $U$ of the vertex set of $G_k$ satisfying $|U| \geq c \cdot 2^{3k/2+2r}$, the graph $G_k[U]$ contains a copy of $K_r$.
\end{lemma}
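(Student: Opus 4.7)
The plan is to derive the lemma directly from Proposition~\ref{prop:(n,d,l)-exists} applied to $G_k$. First, I would record the spectral data: $G_k$ is regular on $n = (2^k-1)\cdot 2^{k-1}$ vertices with degree $d = (2^{k-1}-1)\cdot 2^{k-2}$, and by Lemma~\ref{lemma:lambda_n} its second largest eigenvalue in absolute value equals $\lambda = 2^{3k/2-3}$, so $G_k$ is an $(n,d,\lambda)$-graph to which Proposition~\ref{prop:(n,d,l)-exists} applies.

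A direct computation gives
\[
\frac{n}{d} \;=\; 2\cdot\frac{2^k-1}{2^{k-1}-1} \;=\; 4 + \frac{2}{2^{k-1}-1},
\]
so for every $k\geq 3$ the ratio $n/d$ lies in $(4,14/3]$. I would then observe that $(\lambda+1)\cdot n/d \leq C_1\cdot 2^{3k/2}$ for an absolute constant $C_1$, and that since $n/d>4$ the geometric sum in the hypothesis of Proposition~\ref{prop:(n,d,l)-exists} satisfies $\sum_{j=0}^{r-2}(n/d)^j \leq (n/d)^{r-1}/(n/d-1) \leq (n/d)^{r-1}/3$. It would then suffice to bound $(n/d)^{r-1}$ by $O(4^{r-1})=O(2^{2r})$ in order to obtain the claimed threshold $c \cdot 2^{3k/2+2r}$.

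This last estimate is the only delicate point: because $n/d$ is strictly larger than $4$, a bound of the form $(n/d)^{r-1}\leq O(4^{r-1})$ is not automatic and would fail if $r$ were allowed to grow much faster than $k$. The resolution I plan to use is that the hypothesis $|U|\geq c\cdot 2^{3k/2+2r}$ is vacuous unless $c\cdot 2^{3k/2+2r}\leq n < 2^{2k-1}$, which, for a suitably chosen $c$, forces $r \leq k/4 + O(1)$. In this regime $(r-1)\cdot 2^{1-k} = O(k\cdot 2^{-k}) = O(1)$, so
\[
(n/d)^{r-1} \;\leq\; \bigl(4(1+2^{1-k})\bigr)^{r-1} \;\leq\; 4^{r-1}\cdot e^{(r-1)\cdot 2^{1-k}} \;\leq\; C_2\cdot 4^{r-1}
\]
for an absolute constant $C_2$. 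Combining the bounds and adjusting $c$ accordingly produces exactly the inequality required by Proposition~\ref{prop:(n,d,l)-exists}, and therefore $G_k[U]$ contains a copy of $K_r$.

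I expect this final subtlety -- that $n/d$ exceeds $4$ by a term of size $\Theta(2^{-k})$, but the hypothesis itself restricts $r$ to $O(k)$ and hence keeps the correction $(1+2^{1-k})^{r-1}$ bounded -- to be the only non-routine step. All other pieces are direct computations with the parameters of $G_k$ and a routine invocation of Proposition~\ref{prop:(n,d,l)-exists}.
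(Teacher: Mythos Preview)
Your proposal is correct and follows essentially the same route as the paper's proof: both record the $(n,d,\lambda)$ parameters of $G_k$, observe that $n/d = 4\cdot(1+\frac{1}{2^k-2})$, and handle the fact that $n/d$ slightly exceeds $4$ by noting that the hypothesis $|U|\geq c\cdot 2^{3k/2+2r}$ is vacuous unless $r\leq k/4$, in which regime $(n/d)^{r-1}=\Theta(4^{r-1})$. Your write-up is simply more explicit about the constants and the geometric-sum estimate than the paper's.
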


\begin{proof}
Let $k \geq 3$ and $r \geq 2$ be integers.
By Lemma~\ref{lemma:lambda_n}, the graph $G_k$ is an $(n,d,\lambda)$-graph for
\[n = (2^k-1) \cdot 2^{k-1},~~~ d= (2^{k-1}-1) \cdot 2^{k-2},~~~ \mbox{and}~~~\lambda = 2^{3k/2-3}.\]
Observe that $\frac{n}{d} = 4 \cdot (1+\frac{1}{2^k-2})$ and that $\lambda \cdot (\frac{n}{d})^{r-1} = \Theta( 2^{3k/2+2r} )$, where we have used the assumption that, say, $r \leq k/4$ (Otherwise the assertion of the lemma trivially holds, because there is no subset $U$ of the vertex set of $G_k$ with the required size).
By Proposition~\ref{prop:(n,d,l)-exists}, for every subset $U$ of the vertex set of $G_k$ satisfying $|U| \geq \Omega( \lambda \cdot (\frac{n}{d})^{r-1}) = \Omega( 2^{3k/2+2r} )$, the graph $G_k[U]$ contains a copy of $K_r$, so we are done.
\end{proof}

\begin{remark}
An equivalent statement to that of Lemma~\ref{lemma:K_r-in-G_k} is the following.
For integers $k \geq 3$ and $r \geq 2$, let $\calF \subseteq \Fset_2^k \times \Fset_2^k$ be a collection of non-orthogonal vector pairs such that $|\calF| \geq c \cdot 2^{3k/2+2r}$ where $c>0$ is an absolute constant. Then there exist $r$ pairs $(u_1,v_1),\ldots,(u_r,v_r) \in \calF$ such that $u_i$ and $v_j$ are orthogonal whenever $i \neq j$.
\end{remark}

We need the following simple linear algebra lemma.

\begin{lemma}\label{lemma:subspaces}
Let $W_1, W_2 \subseteq \Fset_2^k$ be two subspaces of dimension at least $k-\ell$.
Then the number of pairs $(w_1,w_2) \in W_1 \times W_2$ satisfying $\langle w_1, w_2 \rangle = 1$ is at least $(2^{k-\ell}-2^\ell) \cdot 2^{k-\ell-1}$.
\end{lemma}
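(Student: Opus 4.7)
The plan is to prove this by a standard double-counting argument, parameterizing over $w_1 \in W_1$ and counting the number of valid $w_2 \in W_2$.

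First, for each $w_1 \in W_1$, I would consider the linear functional $\phi_{w_1}: W_2 \to \Fset_2$ defined by $\phi_{w_1}(w_2) = \langle w_1, w_2 \rangle$. If $\phi_{w_1}$ is the zero functional, it contributes nothing to the count; if it is non-zero, its kernel has codimension $1$ in $W_2$, so the preimage $\phi_{w_1}^{-1}(1)$ is an affine subspace of size $|W_2|/2 \geq 2^{k-\ell-1}$. Thus the number of valid pairs is exactly $2^{\dim W_2 - 1}$ times the number of $w_1 \in W_1$ for which $\phi_{w_1}$ is non-trivial.

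Next, I would bound the number of \emph{bad} $w_1$'s, i.e., those with $\phi_{w_1} \equiv 0$. These are precisely the vectors in $W_1 \cap W_2^{\perp}$, where $W_2^{\perp}$ is the orthogonal complement of $W_2$ in $\Fset_2^k$. Since $\dim W_2^{\perp} = k - \dim W_2 \leq k - (k-\ell) = \ell$, we have $|W_1 \cap W_2^{\perp}| \leq 2^\ell$. Hence the number of \emph{good} $w_1$'s is at least $|W_1| - 2^\ell \geq 2^{k-\ell} - 2^\ell$.

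Combining the two estimates yields at least $(2^{k-\ell} - 2^\ell) \cdot 2^{k-\ell-1}$ pairs, as claimed. There is no real obstacle here; the only subtlety is being careful to handle the case where $\phi_{w_1}$ vanishes on $W_2$ (which is where the $-2^\ell$ term comes from), and to note that we need $\dim W_1, \dim W_2 \geq k - \ell$ only through the inequalities $|W_1| \geq 2^{k-\ell}$ and $|W_2|/2 \geq 2^{k-\ell-1}$.
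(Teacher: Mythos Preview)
Your proof is correct and is essentially identical to the paper's own argument: both parameterize over $w_1 \in W_1$, observe that the functional $w_2 \mapsto \langle w_1, w_2 \rangle$ on $W_2$ is nonzero precisely when $w_1 \notin W_2^{\perp}$, bound $|W_1 \cap W_2^{\perp}| \leq |W_2^{\perp}| \leq 2^\ell$, and multiply the at least $2^{k-\ell}-2^\ell$ good choices of $w_1$ by $|W_2|/2 \geq 2^{k-\ell-1}$.
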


\begin{proof}
For every vector $w_1 \in W_1$ the number of vectors $w_2 \in W_2$ satisfying $\langle w_1, w_2 \rangle = 1$ depends on whether $w_1$ belongs to the orthogonal complement of $W_2$ or not: If $w_1 \in W_2^{\perp}$ then there are no such vectors $w_2$ and otherwise their number is $|W_2|/2$. By $\dim(W_2) \geq k-\ell$, we have $|W_2^{\perp}| \leq 2^\ell$. This implies, using $\dim(W_1) \geq k-\ell$, that for at least $2^{k-\ell}-2^{\ell}$ of the vectors $w_1 \in W_1$ there are $|W_2|/2 \geq 2^{k-\ell-1}$ vectors $w_2 \in W_2$ satisfying $\langle w_1, w_2 \rangle = 1$. Hence, the total number of the required pairs is  at least $(2^{k-\ell}-2^\ell) \cdot 2^{k-\ell-1}$.
\end{proof}

Equipped with Lemmas~\ref{lemma:K_r-in-G_k} and~\ref{lemma:subspaces}, we are ready to prove Theorem~\ref{thm:main}.

\begin{proof}[ of Theorem~\ref{thm:main}]
For a given integer $k$, we prove that the complement $\overline{G_k}$ of the graph $G_k$ satisfies the assertion of the theorem. By Claim~\ref{claim:minrank(G_k_comp)} we have $\minrank_2(\overline{G_k})=k$, and by Proposition~\ref{prop:upper_t(G)} we have $\tb(\overline{G_k}) \leq O(k^2)$. We turn to prove the lower bound on $\tb(\overline{G_k})$.
Because of the asymptotic nature of the bound, we may assume from now on that $k$ is sufficiently large.

By Lemma~\ref{lemma:t(G)_eq_def}, it suffices to show that every neighborhood partition $(N_i)_{i \in S}$ of $\overline{G_k}$ satisfies
\[\sum_{i \in S}{{\minrank}_2(\overline{G_k}[N_i])} = \Omega(k^2).\]
Let $(N_i)_{i \in S}$ be a neighborhood partition of $\overline{G_k}$. It can be assumed that $|S| \leq k^2$ as otherwise there is nothing to prove. Denote $r = \lfloor k/8 \rfloor$, and let $L \subseteq S$ be the collection of vertices $i \in S$ such that $|N_i| \geq c \cdot 2^{3k/2+2r}$, where $c$ is the positive constant from Lemma~\ref{lemma:K_r-in-G_k}. Note that
\begin{eqnarray}\label{eq:S-L}
\sum_{i \in S \setminus L}{|N_i|} \leq c \cdot 2^{3k/2+2r} \cdot k^2 \leq c \cdot k^2 \cdot 2^{7k/4}.
\end{eqnarray}

Denote $L = \{ (u_1,v_1),\ldots,(u_\ell,v_\ell) \}$ where $\ell = |L|$.
We turn to prove that $\ell$ must be linear in $k$.
To this end, consider the subspaces of $\Fset_2^k$ defined by
$W_1 = \{ w \in  \Fset_2^k \mid \langle w,v_j \rangle =0 \mbox{ for all }j \in [\ell] \}$ and
$W_2 = \{ w \in  \Fset_2^k \mid \langle w,u_j \rangle =0 \mbox{ for all }j \in [\ell] \}$,
and notice that each of them has dimension at least $k-\ell$.
By the definition of $G_k$, the pairs $(w_1,w_2) \in W_1 \times W_2$ such that $\langle w_1, w_2 \rangle =1$ are vertices that are not adjacent in $\overline{G_k}$ to any of the vertices of $L$, hence they must be covered by the sets of $(N_i)_{i \in S \setminus L}$.
By Lemma~\ref{lemma:subspaces}, the number of these vertices is at least $(2^{k-\ell}-2^\ell) \cdot 2^{k-\ell-1}$, so using~\eqref{eq:S-L} we obtain that $(2^{k-\ell}-2^\ell) \cdot 2^{k-\ell-1} \leq c \cdot k^2 \cdot 2^{7k/4}$.
This inequality, unless $\ell \geq k/2$, implies that $2^{2(k-\ell-1)} \leq c \cdot k^2 \cdot 2^{7k/4}$, and thus $\ell \geq (1/8-o(1)) \cdot k$, as desired.

Finally, for every $i \in L$ we have $|N_i| \geq c \cdot 2^{3k/2+2r}$, hence by Lemma~\ref{lemma:K_r-in-G_k} the subgraph of $G_k$ induced by $N_i$ contains a complete graph $K_r$, that is, $\alpha(\overline{G_k}[N_i]) \geq r$ for every $i \in L$. We derive that
\[\sum_{i \in S}{{\minrank}_2(\overline{G_k}[N_i])} \geq \sum_{i \in L}{\alpha(\overline{G_k}[N_i])} \geq \ell \cdot r \geq \Omega(k^2),\]
and we are done.
\end{proof}

\subsection{Task-based Non-linear Index Coding}

For a graph $G$ on the vertex set $[n]$, let $\beta_1(G)$ denote the minimum length of a general (i.e., not necessarily linear) index code for $G$ over the binary alphabet. Namely, $\beta_1(G)$ is the smallest integer $\ell$ such that there exist an encoding function $E: \{0,1\}^n \rightarrow \{0,1\}^\ell$ and $n$ decoding functions $D^{(i)}: \{0,1\}^{\ell + |N(i)|} \rightarrow \{0,1\}$ ($i \in [n]$) such that the following holds: For all $x \in \{0,1\}^n$ and $i \in [n]$, $D^{(i)}(E(x),x|_{N(i)}) = x_i$. It is well known and easy to check that every graph $G$ satisfies
\begin{eqnarray}\label{eq:beta_1}
\alpha(G) \leq \beta_1(G) \leq {\minrank}_2(G).
\end{eqnarray}
While the model of embedded index coding was defined and studied in~\cite{PorterW19} for the linear setting, it is natural to consider it for the more general setting where the encoding and decoding functions are not necessarily linear.
The linearity restriction on index coding may sometimes significantly affect the length of the broadcast information (see, e.g.,~\cite{LS07}). However, we observe here that the proof of Theorem~\ref{thm:main} can be used to provide a quadratic gap between the minimum length of an index code and the minimum length of a task-based index code for the non-linear setting as well.

\begin{proposition}
For every integer $k$ there exists a graph $G$ with $\beta_1(G)=k$ such that the minimum total length of a (not necessarily linear) task-based embedded index code for $G$ is $\Theta(k^2)$.
\end{proposition}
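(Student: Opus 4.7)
The plan is to show that the same graph $G = \overline{G_k}$ used in the proof of Theorem~\ref{thm:main} already witnesses the proposition, with only one new ingredient needed. For the parameter $\beta_1$, Claim~\ref{claim:minrank(G_k_comp)} gives $\alpha(\overline{G_k}) = {\minrank}_2(\overline{G_k}) = k$, and the sandwich in~(\ref{eq:beta_1}) then forces $\beta_1(\overline{G_k}) = k$. For the upper bound on the minimum non-linear task-based length there is nothing to prove: Proposition~\ref{prop:upper_t(G)} produces a \emph{linear} task-based code of length $O(k^2)$, which is \emph{a fortiori} a non-linear one.

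All the work is in the $\Omega(k^2)$ lower bound. The key observation is that Lemma~\ref{lemma:t(G)_eq_def}, which characterized $\tb(G)$ via $\sum_i {\minrank}_2(G[N_i])$, admits a natural lower-bound analogue in the non-linear setting. Specifically, any (not necessarily linear) task-based code for $G$ induces a neighborhood partition $(N_j)_{j \in S}$ via the receiver-to-sender assignment $i \mapsto j_i$, and I claim that the length $\ell_j$ broadcast by sender $j$ satisfies $\ell_j \geq \alpha(G[N_j])$. The argument is a standard fooling-set one: let $I \subseteq N_j$ be an independent set of $G$ of size $\alpha(G[N_j])$; for each $i \in I$ independence gives $N(i) \cap I = \emptyset$, so any two inputs $x, x'$ agreeing outside $I$ produce identical side information $x|_{N(i)} = x'|_{N(i)}$. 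Hence to distinguish $x_i \neq x'_i$ the encoder outputs $E^{(j)}(x|_{N(j)})$ and $E^{(j)}(x'|_{N(j)})$ must differ, forcing $2^{|I|}$ distinct codewords and thus $\ell_j \geq |I|$.

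With this bound in hand, the remainder of the proof of Theorem~\ref{thm:main} transfers verbatim. I would assume $|S| \leq k^2$ (otherwise $\sum_j \ell_j \geq |S| > k^2$, since WLOG $\ell_j \geq 1$), set $r = \lfloor k/8 \rfloor$, and split $S$ into the ``large'' set $L$ of indices $j$ with $|N_j| \geq c \cdot 2^{3k/2 + 2r}$ and its complement; applying Lemma~\ref{lemma:subspaces} to the subspaces orthogonal to $\{v_j\}_{j \in L}$ and $\{u_j\}_{j \in L}$ yields $|L| \geq (1/8 - o(1)) \cdot k$, and Lemma~\ref{lemma:K_r-in-G_k} then gives $\alpha(\overline{G_k}[N_j]) \geq r$ for every $j \in L$. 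Summing, $\sum_j \ell_j \geq \sum_{j \in L} \alpha(\overline{G_k}[N_j]) \geq |L| \cdot r = \Omega(k^2)$, as required. The only new step is the fooling-set inequality; this is the mildest possible departure from the linear argument and is the only place where one has to be careful, but it is entirely standard.
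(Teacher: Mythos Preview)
Your proposal is correct and follows essentially the same route as the paper's proof: take $G=\overline{G_k}$, use Claim~\ref{claim:minrank(G_k_comp)} with~\eqref{eq:beta_1} to get $\beta_1(\overline{G_k})=k$, inherit the $O(k^2)$ upper bound from the linear case, and for the lower bound observe that a non-linear task-based code still induces a neighborhood partition and that the proof of Theorem~\ref{thm:main} already lower-bounds $\sum_{i\in S}\alpha(\overline{G_k}[N_i])$ by $\Omega(k^2)$. The only cosmetic difference is that the paper routes the per-sender bound through $\ell_j\geq \beta_1(G[N_j])\geq \alpha(G[N_j])$ (via the non-linear analogue of Lemma~\ref{lemma:t(G)_eq_def}), whereas you prove $\ell_j\geq \alpha(G[N_j])$ directly with the standard fooling-set argument; these are equivalent and equally valid.
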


\begin{proof}[ (sketch)]
For a given integer $k$, we claim that the complement $\overline{G_k}$ of the graph $G_k$ satisfies the assertion of the proposition.
First, by Claim~\ref{claim:minrank(G_k_comp)} combined with~\eqref{eq:beta_1} we obtain that $\beta_1(\overline{G_k}) = k$.
Next, the minimum total length of a task-based index code for $\overline{G_k}$ is clearly bounded from above by $\tb(\overline{G_k})$ which was shown in Theorem~\ref{thm:main} to be $\Theta(k^2)$.
For the lower bound, observe that the minimum total length of a task-based index code for any graph $G$ is the minimum of
\begin{eqnarray}\label{eq:non-linear}
\sum_{i \in S}{\beta_1(G[N_i])}
\end{eqnarray}
over all possible neighborhood partitions $(N_i)_{i \in S}$ of $G$. This follows from the fact that every task-based index code for $G$ naturally induces a neighborhood partition with a set of receivers for every receiver that plays the role of a sender.
Recalling that the proof of Theorem~\ref{thm:main} provides an $\Omega(k^2)$ lower bound on the minimum of $\sum_{i \in S}{\alpha(\overline{G_k}[N_i])}$ over all neighborhood partitions $(N_i)_{i \in S}$ of $\overline{G_k}$, applying again~\eqref{eq:beta_1} we get the same lower bound for the quantity in~\eqref{eq:non-linear}, and we are done.
\end{proof}

\section{Concluding Remarks}\label{sec:remarks}

\begin{itemize}
  \item Lemma~\ref{lemma:K_r-in-G_k} shows that every induced subgraph of $G_k$ on $\Omega( 2^{3k/2+2r} )$ vertices must include a copy of the complete graph $K_r$. The proof is based on spectral properties of $G_k$ and on a pseudo-random property of $(n,d,\lambda)$-graphs that guarantees the existence of a large complete graph $K_r$ in any sufficiently large induced subgraph (see Proposition~\ref{prop:(n,d,l)-exists}). In fact, such subgraphs are known even to include many copies of $K_r$, just as expected in a random graph with edge probability $d/n$ (see~\cite[Theorem~4.10]{KrivelevichS2006}).
     For the graph family $G_k$, one can show that for every $r \geq 2$ and for every strict subset $U$ of the vertex set of $G_k$ satisfying $|U| =m \geq \omega( 2^{3k/2+2r})$, the graph $G_k[U]$ contains $(1+o(1)) \cdot \frac{m^r}{r!} \cdot 4^{-\binom{r}{2}}$ copies of $K_r$.

  \item The bound provided by Lemma~\ref{lemma:K_r-in-G_k} on the number of vertices in an induced subgraph of $G_k$ that guarantees the existence of $K_r$ suffices for us to obtain a tight lower bound, up to a multiplicative constant, for the question studied in this work (see Theorem~\ref{thm:main}). Nevertheless, it will be interesting to better understand the minimum number of vertices needed for Lemma~\ref{lemma:K_r-in-G_k} to hold. It seems plausible that the bound given there, whose proof relies on spectral techniques, can be somewhat improved. For example, for the special case of $r=2$ the spectral analysis implies that any independent set in $G_k$ has size at most $O(2^{3k/2})$, whereas Alon~\cite{ProofByAuthority} has proved an improved upper bound of $2^{(1+o(1)) \cdot k}$ using different techniques (see~\cite[Section~5]{ChlamtacH14}).
\end{itemize}

\section*{Acknowledgement}
We are grateful to the anonymous reviewers for their valuable suggestions.

\bibliographystyle{abbrv}
\bibliography{task-based-IC}

\end{document}